\title{Differentially Private Empirical Cumulative Distribution Functions}
\author{
  Antoine Barczewski,
  Amal Mawass
  \and
  Jan Ramon
}
\institute{
MAGNET INRIA Lille, France}
\begin{document}

\maketitle

\begin{abstract}
  
In order to both learn and protect sensitive training data, there has been a growing interest in privacy preserving machine learning methods.  Differential privacy has emerged as an important measure of privacy.  We are interested in the federated setting where a group of parties each have one or more training instances and want to learn collaboratively without revealing their data.

In this paper, we propose strategies to compute differentially private empirical distribution functions.  While revealing complete functions is more expensive from the point of view of privacy budget, it may also provide richer and more valuable information to the learner.  We prove privacy guarantees and discuss the computational cost, both for a generic strategy fitting any security model and a special-purpose strategy based on secret sharing.  We survey a number of applications and present experiments.

\end{abstract}

\newcommand{\janfoot}[1]{}
\newcommand{\amalfoot}[1]{}
\newcommand{\antoinefoot}[1]{}

\section{Introduction}

Over the last years, there has been an increasing interest in privacy-preserving machine learning, i.e., learning while protecting the sensitive underlying training data.
Differential privacy \cite{Dwork2014a} has become the gold standard to measure the privacy of an algorithm.  For a wide range of machine learning strategies, versions have been proposed which output models with an appropriate amount of noise to satisfy differential privacy.   A setting of particular interest is the federated learning setting where there is a large number of parties which each own one or more training instances and which want to jointly compute a statistic or model without revealing their own data.

In this paper we are interested in empirical cumulative distribution functions (ECDFs).  Consider a setting where every party in a group has one (or more) values.  Then, the empirical distribution function returns for every threshold the number of values which are smaller than that threshold.  Cumulative distribution functions play an important role in statistics and machine learning, e.g., in tail bounds and in statistics based on rankings. An ECDF which is often used for machine learning validation is the receiver-operator characteristic (ROC) curve.  Often an ECDF is an intermediate result from which a single number is computed, e.g., the area under the ROC curve is an important metric capturing the behavior of a classifier in a single number.  Nevertheless, knowledge of the complete ECDF is in many cases an asset of its own value, e.g., for a ROC curve depending on the cost structure of the problem at hand the beginning or end of the curve may be the most interesting.

We propose differentially private ECDFs.  For a fixed privacy level, publishing a complete ECDF requires more noise than publishing a single aggregated value, but is also much more informative.  Moreover, it turns out that the amount of noise to be added only needs to be logarithmic in the required precision, which is better than pointwise adding independent noise to ECDF evaluations.
Our work here is inspired by but also improves on earlier work on continual observation of statistics in data streams, which was initiated using $\epsilon$-differential privacy by \cite{dwork2010differential} and later extended towards among others indefinite time periods and other privacy notions such as Renyi differential privacy \cite{Chan2011,DBLP:journals/corr/abs-2103-16787}.  Our work is to some extent orthogonal to these extensions: for simplicity of explanation we will adopt the classic $\epsilon$-differential privacy, but our work can be combined with several of these other ideas.

We will also investigate federated algorithms to compute differentially private ECDFs.  In particular, we are interested in both evaluating ECDFs and evaluating inverse ECDFs.  We explore two avenues.  First, we investigate an approach starting from a secure aggregation protocol as a building block.  The advantage of such approach is that it inherits the security guarantees of the aggregation protocol.  One can plug in an aggregator which is cheaper but assumes parties are honest-but-curious, or one can plug in an aggregator which is more expensive but robust against malicious behavior.  The result then is a federated algorithm to compute an ECDF having the same security guarantees.  We hope to contribute in this way to an evolution to a more modular organization of building blocks where there is no need for a different algorithm for every different problem and security setting.
Second, we will investigate strategies to compute ECDF evaluations more efficiently compared to pointwise federated evaluations, at the cost of making some assumptions on the aggregation protocol used.


In summary, our contributions are:
\begin{itemize}
\item We propose a strategy for generating differentially privately a complete ECDF.  We prove that the resulting curve is $\epsilon$-differentially private.
\item In the process of doing so, we make a minor constant-factor improvement over differential privacy guarantees for continually observed statistics as shown in \cite{DBLP:journals/corr/abs-2103-16787}.
\item As a differentially private version of a non-decreasing function is not necessarily non-decreasing itself, we propose a strategy to smooth differentially private functions.  This makes the function non-decreasing again, and in some cases may even reduce the error induced by the differential privacy noise.
\item We discuss strategies for efficient implementations.  In particular, we propose both a generic algorithm which can start from any secure aggregation operator and inherit its security/privacy model and a specific strategy strategy based on secret sharing which has asymptotically better complexity.
\item We present in more detail two applications.  First, we discuss how to make $\epsilon$-differentially private ROC curves using our technique.  Second, we present a differentially private Hosmer-Lemeshow statistic.
\item We illustrate our techniques with a number of experiments.
\end{itemize}

The remainder of this paper is organized as follows.
In Section \ref{sec:prelim} we introduce basic concepts and notations.
Next, in Section \ref{sec:method} we present differentially private ECDFs and prove the corresponding privacy guarantees and in Section \ref{sec:algo} we discuss federated algorithms to compute differentially private ECDFs.
In Section \ref{sec:exp} we present experiments illustrating our approach.
Finally, in Section \ref{sec:concl} we conclude and outline future work.

\section{Preliminaries}
\label{sec:prelim}

\newcommand{\indicVal}[1]{\mathbbm{I}\left[{#1}\right]}
\newcommand{\indicSet}[1]{\mathbbm{1}_{{#1}}}
We denote by $[m,n]=\{z\in\mathcal{Z}\mid m\le z \le n\}$ the set of integers between and including $m$ and $n$, and by $[n]=[1,n]$ the set of the first $n$ positive integers.
For a boolean expresseion $b$, we denote by $\indicVal{b}$ its truth value, i.e., $\indicVal{true}=1$ and $\indicVal{false}=0$.  For a set $X$, we denote its indicator function by $\indicSet{X}$, i.e., $\indicSet{X}(x) = \indicVal{x\in X}$.

We consider datasets $X=\{x_i\}_{i=1}^n\in \mathcal{X}^n$ where $\mathcal{X}$ is a space of instances and $n\in\mathbb{N}$ is a positive integer.
We assume there are $n$ parties $P_i$ with $i\in[n]$, each owning one instance $x_i$.  Our results will generalize easily to the case where every party may own multiple instances.  We assume the instances $x_i$ include sensitive data and the parties don't want to reveal them.  Still, the parties want to collaborate to compute statistics of common interest.

\begin{definition}[differential privacy]
  Let $\epsilon>0$.  Let $\mathcal{A}$ be an algorithm taking as input datasets from $\mathcal{X}^*$.  Two datasets $X^{(1)},X^{(2)}\in \mathcal{X}^*$ are adjacent if they differ in only one element.   The algorithm $\mathcal{A}$ is $\epsilon$-differentially private ($\epsilon$-DP) if for every pair of adjacent datasets $X^{(1)}$ and $X^{(2)}$, and every subset $S$ of possible outputs of $\mathcal{A}$, $P(\mathcal{A}(X^{(1)})\subseteq S) \le e^\epsilon P(\mathcal{A}(X^{(2)})\subseteq S)$.
\end{definition}

\newcommand{\ffeat}{\phi}

\begin{definition}[U-statistic]
  Let $m \ge 1$ be a positive integer and let $\ffeat :\mathcal{X}^{m} \to \mathbb{R}$ be a symmetric function.  The $U$-statistic with kernel $\ffeat$ is the function mapping samples $\{x_i\}_{i=1}^n\in\mathcal{X}^n$ on
  \begin{equation}
    U_\ffeat(X)={\binom {n}{m}}^{-1}\sum _{1\le i_{1}<\dots <i_{r}\le n}\ffeat (x_{i_{1}},\dots ,x_{i_{m}}).
  \end{equation}
  We say $U_\ffeat$ is a $U$-statistic of order $m$.
\end{definition}
Of particular interest are $U$-statistics of order $1$, which are averages of the form $U_\ffeat(X)= (1/n) \sum_{i=1}^n \ffeat(x_i)$.

\newcommand{\rvseq}{\eta}
\newcommand{\rvidxs}{I}
\newcommand{\rvidxu}{\mathcal{I}}
\newcommand{\rvidxL}[1]{\mathcal{I}[{#1}]}
\newcommand{\rvidxLI}[2]{\mathcal{I}[{#1},{#2}]}
Let $\rvseq=(\rvseq_i)_{i\in\rvidxu}$ be a vector of random variables for some set $\rvidxu$ of indices.  Then, to compute differentially private statistics it will be convenient to define
\[
{\hat{U}}_\ffeat(X, \rvidxs) =  U_\ffeat(X) + \sum_{i\in \rvidxs} \rvseq_i
\]
Many strategies have been proposed to compute such averages with higher or lower security and privacy guarantees.
The simplest but least secure is the classical trusted curator setting where all input is sent to a trusted party which makes the average and sends it back.
Strategies such as \cite{Shi2011,Bonawitz2017a,ChanSS12} focus on securely computing an average without disclosing the data to any party, assuming a honest-but-curious setting, i.e., parties are assumed to follow the protocol honestly even if they are curious and may try to infer information from what they observe.
The strategy proposed in \cite{Dwork2006ourselves} aim at better verifiability, but induces a cost quadratic in the number of parties.  The algorithms in \cite{Jayaraman2018,Sabater2021sample} integrate more strongly noise addition in the secure aggregation step, but relies on the additional assumption that two servers don't collude.
Recently, the shuffle model of privacy \cite{Cheu2019,amp_shuffling,Hartmann2019,Balle2020,Ghazi2020ICML} has been studied, where inputs are passed via a trusted/secure shuffler that obfuscates the source of the messages, leading to an alternative trust model.  It is also possible to distribute trust over the participating parties rather than relying on a limited number of servers for secret keeping \cite{Sabater2021accurate}.

We will here simply assume the existence of an operation $UStat(\ffeat,X,\rvseq:\rvidxs)$ that computes and publishes ${\hat{U}}_\ffeat(X,\rvidxs)$ in an appropriate way compliant with the considered attack model.  Accordingly, the cost of $UStat$ will depend on the strategy used, e.g., if the parties are assumed to be honest-but-curious the cost may be lower than if the method needs to be robust against certain malicious behavior.  We assume the operation $UStat$ has the necessary facilities to draw, store and keep secret value of the random variables in $\rvseq$.  As our generic algorithm will only use $U$-statistics and will perform subsequent calculations in the open, our approach will inherit its attack model directly from $UStat$.

\newcommand{\ecdf}[1]{{F_{{#1}}}}
\newcommand{\ecdfdp}[1]{{{\hat{F}}_{{#1}}}}
\newcommand{\ecdfsm}[1]{{{\acute{F}}_{{#1}}}}
\begin{definition}[Empirical cumulative distribution function]
  Let $\ffeat:\mathcal{X}\to\mathbb{R}$.
  Given a sample $X$, the empirical cumulative distribution function (ECDF) of $\ffeat$, denoted by $\ecdf{\ffeat}$, is the function $\ecdf{\ffeat}(X;\cdot) : \mathbb{R}\to[0,1]$ with \[\ecdf{\ffeat}(X,t) = \frac{1}{n}\left|\left\{i\in[n]\mid \ffeat(x_i)\le t\right\}\right|\]
\end{definition}

\section{Method}
\label{sec:method}
\subsection{Private ECDF}
\newcommand{\phimin}{\phi^{\hbox{{\small{min}}}}}
\newcommand{\phimax}{\phi^{\hbox{{\small{max}}}}}
Let $\ffeat:\mathcal{X}\to\mathbb{R}$.
Let $\phimin=\min_{i\in[n]} \ffeat(x_i)$ and $\phimax=\max_{i\in[n]} \ffeat(x_i)$.
Let $N\in \mathbb{N}$ and let $\tau \in \mathbb{R}^{[N]}$ be an ordered set of points on which one may want to evaluate $\ecdf{\ffeat}$, e.g., one may take $\tau=\{t\in\psi\mathbb{Z}\mid \phimin\le t \le \phimax\}$ for some precision parameter $\psi$, or if the relative error is more important than the absolute error one may take $\tau = \{e^{\psi z} \mid z\in\mathbb{Z} \wedge \phimin\le e^{\psi z} \le \phimax\}$.  We assume that $\tau_1=\phimin$ and $\tau_N=\phimax$.

\newcommand{\clN}{L} 
Let $\clN=\lceil\log_2(N)\rceil$.
Let $\rvidxL{L} = \{(j,l) \mid l\in[0,L]\wedge j\in [\lceil 2^{L-l}\rceil]\}$.
Let $\eta=\left(\eta_{j,l}\right)_{(j,l)\in \rvidxL{L}}$ be a set of random variables with  $\eta_{j,l} \sim Lap((L+1)/\epsilon)$ for $(j;l)\in\rvidxL{L}$.
Then, we define the function $\ecdfdp{\ffeat}$ by
\begin{equation}
  \label{eq:def.ecdfdp.pt}
  \ecdfdp{\ffeat}(X,\tau_i) = \ecdf{\ffeat}(X,\tau_i) + \frac{1}{|X|} \sum_{l=0}^L \eta_{\lceil i/2^l\rceil,l}
\end{equation}
While it is possible to reduce the errors obtained later by up to $15\%$ by using a base different from $2$ following \cite{DBLP:journals/corr/abs-2103-16787}, using base $2$ simplifies our explanation and the later algorithms.

\newcommand{\thmEcdfDp}{
  Publishing $\ecdfdp{\ffeat}(X,\tau_i)$ for all $i\in[N]$
  is $\epsilon$-DP. The expected squared error is $\mathbb{E}[(\ecdf{\ffeat}(x)-\ecdfdp{\ffeat}(x))^2]=2(L+1)^3/\epsilon^2$.
  }

\begin{theorem}
  \label{thm:ecdf.dp}
  \thmEcdfDp
\end{theorem}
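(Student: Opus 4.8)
The plan is to handle the two claims separately: the $\epsilon$-DP guarantee by a coupling (translation-of-noise) argument, and the expected squared error by a direct second-moment computation.

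First I would pin down how adjacency acts on the true ECDF. Since $\ecdf{\ffeat}(X,\tau_i)$ is a cumulative count, replacing the single value owned by one party just moves one ``jump'' of the step function from one threshold location to another; hence the vector of true values changes by $\frac{1}{n}\delta$, where $\delta_i\in\{-1,0,+1\}$ is, up to a global sign, the indicator of a contiguous range of indices $[p,q]\subseteq[N]$ and vanishes outside it. This is the key structural fact: the perturbation caused by one party is supported on an \emph{interval} of thresholds and has height exactly $1$ there.

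Next I would set up the coupling. Write $b=(L+1)/\epsilon$ for the Laplace scale and $(G\eta)_i=\sum_{l=0}^{L}\eta_{\lceil i/2^l\rceil,l}$ for the noise attached to output $i$, noting that $(G\eta)_i$ sums exactly the $L+1$ ancestors of leaf $i$ in the implicit binary tree on $[N]$. Suppose I can find a shift $\Delta=(\Delta_{j,l})$ on the tree nodes whose root-to-leaf path sums reproduce the perturbation, i.e.\ $(G\Delta)_i=\delta_i$ for all $i$. Then the translation $\eta\mapsto\eta+\Delta$ is a measure-preserving bijection on noise space that sends the output of $\ecdfdp{\ffeat}$ on $X^{(1)}$ with noise $\eta$ to the \emph{identical} output on $X^{(2)}$ with noise $\eta+\Delta$, because the $\delta$-term exactly cancels the change in the true ECDF. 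Since the $\eta_{j,l}$ are independent $Lap(b)$, the product density changes by at most a factor $e^{\|\Delta\|_1/b}$ under this shift, and a change of variables in the output integral turns this into $P(\ecdfdp{\ffeat}(X^{(1)})\in S)\le e^{\|\Delta\|_1/b}\,P(\ecdfdp{\ffeat}(X^{(2)})\in S)$. Thus $\epsilon$-DP follows as soon as $\|\Delta\|_1\le L+1$.

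The hard part is the combinatorial lemma that such a $\Delta$ with $\|\Delta\|_1\le L+1$ always exists: every contiguous interval indicator on $[N]$ admits a \emph{signed} representation as a sum along root-to-leaf paths using at most $L+1$ dyadic nodes. I would prove this by induction on the tree depth: an interval either lives inside one child subtree (costing nothing at the root) or straddles both children, in which case it is a suffix of the left subtree together with a prefix of the right subtree. The point is that prefixes and suffixes have cheap signed representations -- complementing against the ``all-ones'' root node lets one swap an expensive disjoint block decomposition for its complement -- so the per-level cost is amortized and the total stays at most $L+1$ rather than the $\approx 2L$ one gets from the naive dyadic cover. This tightening is precisely where the claimed constant-factor improvement over the continual-observation bound comes from, and squeezing the constant down so that it matches the scale $(L+1)/\epsilon$ is the delicate step I expect to be the main obstacle.

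Finally, the error computation is immediate: for fixed $i$, $\ecdfdp{\ffeat}(X,\tau_i)-\ecdf{\ffeat}(X,\tau_i)=\frac{1}{n}\sum_{l=0}^{L}\eta_{\lceil i/2^l\rceil,l}$ is a sum of $L+1$ \emph{distinct} (hence independent) nodes, one per level. Each is $Lap((L+1)/\epsilon)$ with mean $0$ and variance $2(L+1)^2/\epsilon^2$, so the sum has mean $0$ and variance $(L+1)\cdot 2(L+1)^2/\epsilon^2=2(L+1)^3/\epsilon^2$, which is the stated expected squared error (the factor $1/n^2$ from the normalization applying to the ECDF itself).
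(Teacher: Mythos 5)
Your proposal follows essentially the same route as the paper's proof: the same reduction of adjacency to a contiguous-interval perturbation of the true ECDF, the same noise-translation (coupling) argument bounding the density ratio by $e^{\|\Delta\|_1 / b}$, the same key combinatorial lemma giving a signed dyadic representation of an interval indicator with at most $L+1$ nonzero coefficients via the suffix-of-left-block/prefix-of-right-block split plus the complementation trick, and the same variance computation. The only place your sketch is lighter than the paper is the induction establishing the $\lceil (L+1)/2\rceil$ bound for prefixes and suffixes, which the paper makes precise by descending two levels of the tree at a time (four cases according to which quarter the endpoint falls in), so that each pair of levels contributes at most one nonzero coefficient.
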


The proof can be found in Appendix \ref{sec:proof.thm.ecdf.dp}.
It follows to a large extent the ideas from \cite{dwork2010differential,DBLP:journals/corr/abs-2103-16787}, but achieves a slightly better result by explaining the difference between outputs for adjacent datasets by not only positive but also negative changes in the noise terms.
Our ideas also allow for improving Theorem 1 in  \cite{DBLP:journals/corr/abs-2103-16787} on differentially private continual observation of statistics in data streams, we provide more details in appendix.

\subsection{Smooth DP ECDF}

\newcommand{\subtau}{B}
\newcommand{\lossFunc}{\mathcal{L}}
While we know that $\ecdf{\ffeat}$ is a non-decreasing function, due to the noise addition this doesn't hold anymore for $\ecdfdp{\ffeat}$.  We can correct this problem by finding the non-decreasing function $\ecdfsm{\ffeat}$ which minimizes $\lossFunc(\ecdfdp{\ffeat},\ecdfsm{\ffeat})$ for an appropriate loss function $\lossFunc$.   As $N$ may be large, in practice we may only be interested in finding an appropriate $\ecdfsm{\ffeat}$ for a limited set of points $(\tau_i)_{i\in \subtau}$ with $\subtau \subseteq [N]$.

In particular, we define $\ecdfsm{\ffeat}$ the following optimization problem:
\begin{equation}
  \label{eq:smooth.opt}
  \begin{array}{l}
    \text{minimize} \sum_{(i,j)\in\rvidxL{L}} \nu_{i,j}^p \\
    \text{s.t. }\forall i\in\subtau: \ecdfsm{\ffeat}(X,\tau_i) = \ecdfdp{\ffeat}(X,\tau_i) + \sum_{l=0}^L \nu_{\lceil i/2^l\rceil,l} \\
    \phantom{\text{s.t. }}\ecdfsm{\ffeat}(X,\min(B)) \ge 0 \\
    \phantom{\text{s.t. }}\ecdfsm{\ffeat}(X,\max(B)) \le 1 \\
    \phantom{\text{s.t. }}\forall i,j \in B: i<j \Rightarrow \ecdfsm{\ffeat}(X,\tau_i) \le \ecdfsm{\ffeat}(\tau_j) \\
    \end{array}
\end{equation}

Both $p=1$ and $p=2$ are plausible here.   As the loglikelihood of a vector of Laplace noise variables is proportional to its $1$-norm, $p=1$ may be appealing.  Still, we also will see a number of applications where the $2$-norm (i.e., $p=2$) is more appropriate.

Computing $\ecdfsm{}$ from $\ecdfdp{}$ is a post-processing step after achieving differential privacy, so it doesn't reduce the privacy guarantee.  On the other hand, it makes later processing requiring a non-decreasing function possible and may reduce the error induced by the DP noise (see Section \ref{sec:exp.smooth}).

\section{Algorithms}
\label{sec:algo}

\newcommand{\ffeatLE}[1]{{\ffeat_{\le {#1}}}}
In this section we will investigate algorithms to compute ECDFs and their inverse.  First, observe that we can write an ECDF evaluation as a U-statistic evaluation:
\[
\ecdf{\ffeat}(X,t) = U_{\ffeatLE{t}}(X)
\]
where $\ffeatLE{t}(x) = \indicVal{\ffeat(x)\le t}$.  In other words, if we want to evaluate $\ecdf{\ffeat}(X,t)$, we ask for each instance $x\in X$ whether $\phi(x)$ is smaller than $t$, and then count the positive answers.  Similarly, for the differential private version we can write
\begin{equation}
  \label{eq:ecdfdp.ustat}
\ecdfdp{\ffeat}(X,\tau_i) = {\hat{U}}_{\ffeatLE{\tau_i}}(X,\rvidxLI{L}{i})
\end{equation}
where $\rvidxLI{L}{i}=\{(\lceil i/2^l\rceil,l)\mid l\in[0,L] \}$.

\subsection{Pointwise evaluation}
Starting from any secure aggregation protocol implementing ${\hat{U}}_\cdot(\cdot,\cdot)$, Eq \eqref{eq:ecdfdp.ustat} allows for evaluating $\ecdfdp{\ffeat}$ on a number $\tau_i$ with $i\in[N]$ with the same security and privacy guarantees as that basic secure aggregation protocol.
If one wants to evaluate $\ecdfdp{\ffeat}$ on a set of values $\{\tau_i\}_{i\in \subtau}$ with $\subtau\subseteq [N]$, then we can just repeat the secure aggregation protocol.  In each iteration all parties must participate, which implies the communication cost is $O(n|\subtau|)$.

\subsection{Function secret sharing}
While this scheme is generic as it allows for any secure aggregation protocol, it is possible under particular attack models to improve on its complexity.  In particular, while protocols based on homomorphic encryption, multi-party computing or secret sharing may allow to aggregate values without revealing them, in their basic form they need a contribution of (and hence communication with) the parties storing the data in cleartext for each new query which must be answered, in our case for each $x$ on which we want to evaluate $\ecdfdp{\ffeat}(x)$ using a secure aggregation.  Recently, function secret sharing (FSS) techniques were proposed which allow to secretly share complete functions.  Here, we will use FSS for comparison functions, which was proposed in \cite{Boyle2015}.

Function secret sharing protocols consist of two operations: $gen$ and $eval$.  Given a function $f$ from the appropriate class, $gen(f)$ returns a set of $m\ge 2$ keys $(k^{(1)} \ldots k^{(m)})$.  These keys are indistinguishable (for algorithms running in polynomial time) from randomly drawn keys and can therefore be distributed over servers which are trusted to not collude.  To evaluate the function on some input $x$, the server who received $k^{(j)}$ ($j\in[m]$) can apply $y^{(j)}=eval(k^{(j)},x)$. These outputs $y^{(j)}$ are still indistinguishable from numbers drawn from some random distribution, but have the property that $f(x)=\sum_{j=1}^m y^{(j)}$.

\newcommand{\ffeatGE}[1]{{\ffeat_{\ge {#1}}}}
In our case, we exploit the class of comparison functions, i.e., the class of functions $\mathcal{F}^<_N = \{\ffeatGE{\tau_i} \mid i \in [N] \}$ where $\ffeatGE{t}(x) = \indicVal{\ffeat(x)\ge t}$. The work \cite{Boyle2015} proposes functions $gen$ and $eval$ for this class.  The $gen$ function returns a pair of keys, but can be extended to returning a larger number $m>2$ of keys.  Every party $P_i$ with private data $x_i$ applies the $gen$ function to $\ffeatGE(x_i)$, i.e., sets $(k_i^{(1)},k_i^{(2)}) = gen(\ffeatGE(x_i))$.  For $j\in [m]$, server $j$ receives all $j$-th keys, i.e., the set $K^{(j)}=\{k_i^{(j)} \mid i\in [n]\}$.  Whenever one wants to evaluate the ECDF at some point, all servers $j\in[m]$ compute $Y^{(j)}=\sum_{i=1}^n eval(k_i^{(j)})$, and then jointly sum $\sum_{j\in[m]} Y^{(j)}$ and add the appropriate DP noise.  The length of the generated keys is $O(L(\lambda+\log(n)))$ where $\lambda$ is a security parameter typically larger than $\log(n)$.

The cost of this algorithm can be divided in two phases.  First, the preprocessing phase where keys are generated is dominated by the sending of a key from each data owner party to each server. Second, the evaluation phase is relatively cheap, for every evaluation the user sends to all servers the number $\tau_i$ on which to evaluate the ECDF, the servers communicate among themselves for the addition and then send the answer back to the user.  The communication cost is hence linear in $m$.  The computation cost involves all servers going over the key with the input $\tau_i$, and hence the computation cost is linear in the key length for every server.

The size of the keys $k_i$ is constant but considerable (typically a few kilobits) so this approach is not recommended for small $n$ or $N$, but the communication cost is constant for the parties owning the data and only linearly in the number of evaluations $|\subtau|$ for the $m$ servers, giving a total communication cost of $O(n+m|B|)$ which is for a constant $m$ asymptotically better than the generic approach with the additional advantage that after the initial distribution of the keys only the $m$ servers need to stay online to answer queries during the computation.

\subsection{Inverse ECDF evaluation}

Next to evaluating an ECDF, one also often needs to evaluate the inverse ECDF $\ecdf{\ffeat}^{-1}$, i.e., one would like to know what is the value corresponding to a particular quantile.  A natural strategy is to apply binary search, as illustrated by Algorithm \ref{algo:invEcdf}.

\begin{algorithm}[ht]
\caption{}\label{algo:invEcdf}
  \begin{algorithmic}
    \Function{Inv-Ecdf-DP}{$p$}
      \State{$a \gets 0$; $b\gets 1$}
      \While{$b-a > \psi$}  \Comment{$\psi$ = precision}
        \State{$m\gets (a+b)/2$} \Comment{Consider middle}
        \If{$\ecdfsm{\ffeat}(X,m) < p$} \Comment{Split interval}
          \State{$a\gets m$}
        \Else
          \State{$b\gets m$}
        \EndIf
        \EndWhile \Comment{Until interval small enough}
      \State{\Return{$(a+b)/2$}}
    \EndFunction
  \end{algorithmic}
  \end{algorithm}

\section{Applications}
\subsection{The ROC curve and the area under it}
\label{sec:app.roc}

The ROC curve \cite{Provost98b:proc} is a popular way to visualize the characteristics of a classifier.  It gives a more complete view than a single performance measure such as accuracy or the area under the ROC curve.

\newcommand{\trueclass}{c^*}
\newcommand{\estscore}{c_0}
\newcommand{\estclass}{{\hat{c}}}

As before, let $\mathcal{X}$ be a space of instances.  Let $\trueclass:\mathcal{X}\to\{0,1\}$ be a function assigning to each instance $x$ its true class label.  Let $\estscore:\mathcal{X}\to \mathbb{R}$ be a function assigning to each instance an estimated score, where instances with a lower score are more likely to be positive (have class $1$) and instance with a higher score are more likely to be negative (have class $0$).  Let $\estclass(t, x) = \indicVal{\estscore(x)\le t}$.

Let $X\in\mathcal{X}^n$ be a dataset and let $t$ be a threshold, the true positive rate is
\[
  TPR(X,t) = \frac{\{x\in X\mid \trueclass(x)=1 \wedge \estclass(t, x)=1\}}{\{x\in X\mid \trueclass(x)=1\}}
\]
and the false positive rate is
\[
  FPR(X,t) =  \frac{\{x\in X\mid \trueclass(x)=0 \wedge \estclass(t, x)=1\}}{\{x\in X\mid \trueclass(x)=0\}}
\]
The ROC curve plots TPR against FPR, so $(r_F,r_T) \in ROC$ iff there is a threshold $t$ such that $r_T=TPR(X,t)$ and $r_F=FPR(X,t)$.
\newcommand{\cmax}{c_{max}}
Let $\cmax=\max_{x\in X} \estscore(x)$.
Let $\phi_{TP}(x) = \indicVal{\trueclass(x)=1 \wedge \estclass(t, x)=1}$ and $\phi_{FP}(x) = \indicVal{\trueclass(x)=0 \wedge \estclass(t, x)=1}$.  Then, $TPR(X,t) = \ecdf{\phi_{TP}}(X,t)/\ecdf{\phi_{TP}}(X,\cmax)$ and $FPR(X,t) = \ecdf{\phi_{FP}}(X,t)/\ecdf{\phi_{FP}}(X,\cmax)$.  So publishing $\ecdfdp{\phi_{TP}}$ and $\ecdfdp{\phi_{FP}}$ is sufficient to transmit an approximate ROC curve.  Moreover, if we add sufficient noise to make both functions $\epsilon/2$-differentially private, their combined disclosure is $2\epsilon$-differentially private.

Figure \ref{fig:ex.roc.framingham.0.5} illustrates this process on a relatively small dataset (see Section \ref{sec:exp.dataset}) where the effect of DP noise is clearly visible.  As both coordinates of a point in the ROC curve are ECDFs to which noise is added, one can see that $\ecdfdp{\ffeat}$ has both horizontal and vertical deviations from the true ROC curve.  In this figure, one can also see to some extent the organization of the noise as binary tree: the first half of the DP curve seems to be above the true ROC curve, while the latter half is lower, suggesting the noise variable that was added to the first half got a clearly higher value.  The smoothed curves resolve this problem and stay in this case much closer to the true ROC curve.

\begin{figure}[ht]
\vskip 0.2in
\begin{center}
\centerline{\includegraphics[width=\columnwidth]{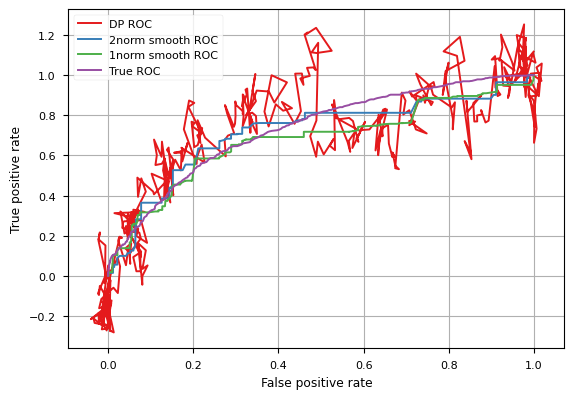}}
\caption{ROC curve for logistic regression on the Heart disease dataset, and $\epsilon$-DP curves with $\epsilon=0.5$.}
\label{fig:ex.roc.framingham.0.5}
\end{center}
\vskip -0.2in
\end{figure}

As the size of the dataset increases, the impact of the noise decreases and even for smaller values of $\epsilon$ the differentially private curve gives a good picture of the true one, e.g., see Figures \ref{fig:ex.roc.bank.0.2} and  \ref{fig:ex.roc.diabetic.0.05} in appendix.

\subsection{Calibration and the Hosmer-Lemeshow statistic}
\label{sec:app.HL}

The Hosmer-Lemeshow test is a statistical test which is popular in health science \cite{HosmerLemeshow2013} and other domains.  While it has some limitations, there is no universal agreement on what is the best alternative, and the Hosmer-Lemeshow statistic is still quite commonly used.
It is often used as calibration test for logistic regression, but may also be applied to other machine learning models \cite{Meyfroidt2011}.

While applying a non-decreasing function to the output of a classifier will not change its ROC curve, it will impact its callibration.  For models outputting a probability that an instance is positive, it is desirable that the estimated probability of being positive is close to the true probability.  The Hosmer-Lemeshow statistic can evaluate such goodness-of-fit.
To compute it, a first step is to rank all instances in increasing order of predicted probability of being positive.  Then, this ranked list is partitioned into $Q$ equally sized quantile groups, where typically $Q=10$, so group $1$ contains the instances which are predicted to be most likely negative and group $Q$ contains the instances which are predicted to be most likely positive.
Then, the Hosmer-Lemeshow statistic is defined by
\begin{equation}
  \label{eq:HL.def}
  H=\sum\limits_{i=1}^{Q}\bigg(\frac{(O_{1i}-E_{1i})^2}{E_{1i}}+\frac{(O_{0i}-E_{0i})^2}{E_{0i}}\bigg)
  \end{equation}
Where $O_{1i}$ is the observed number of positive instances, $E_{1i}$ is the predicted number of positive instances, $O_{0i}$ is the observed number of negative instances  and $E_{0i}$ is the predicted number of negative instances in group $i$.
So a group is a collection of instances where the expected probability lies in a specific interval, and the test checks how far the observed class distribution deviates from this.
Once the Hosmer-Lemeshow statistic is computed, one can compare it to a chi-squared distribution with $Q-2$ degrees of freedom to test the hypothesis that the observed classes in each group are distributed according to the predictions.

\newcommand{\HLModelFunc}{\mathcal{M}}
\newcommand{\HLObserveFunc}{\mathcal{Y}}
Assume we have a model $\HLModelFunc:X\to[0,1]$ mapping each instance on the predicted probability it is positive and a function $\HLObserveFunc:X\to\{0,1\}$ mapping every instance on its true class label, either $0$ (negative) or $1$ (positive).  Let $\HLModelFunc[l,u](x) = \HLModelFunc(x).\indicVal{l\le \HLModelFunc(x)\le u}$ and $\HLObserveFunc[l,u](x) = \HLObserveFunc(x).\indicVal{l\le \HLModelFunc(x)\le u}$.
Then, $H$ can be computed using only $U$-statistics following Algorithm \ref{algo:hlstat}.

\begin{algorithm}[ht]
\caption{}\label{algo:hlstat}
\begin{algorithmic}[1]
  \Function{HL-Stat-DP}{$X$ : dataset, $Q$ : number of groups, $\epsilon$ : privacy level; $L$ : precision parameter}
  \State{$t_0\gets 0$; $t_Q\gets 1$; $n\gets |X|$}
  \State{$\epsilon'\gets \epsilon/(L+9)$}
  \ForAll{$q\in[Q-1]$:}
  \State{\label{ln:HL.ecdf}$t_q\gets \ecdfdp{\HLModelFunc}^{-1}(q/Q)$} \Comment{$(L+1)\epsilon'$-DP}
  \EndFor
  \ForAll{$q\in[Q]$, $s \in \{0,1\}$}
  \State{\label{ln:HL.expobs1}$E_{s,i} \gets n.{\hat{U}}_{\HLModelFunc[t_{q-1},t_q]}(X, \{(-q,s)\})$} \Comment{$\epsilon'$-DP}
  \State{\label{ln:HL.expobs2}$O_{s,i} \gets n.{\hat{U}}_{\HLObserveFunc[t_{q-1},t_q]}(X, \{(-q,s)\})$} \Comment{$\epsilon'$-DP}
  \EndFor
  \State{\Return{$\sum\limits_{i=1}^{Q}\bigg(\frac{(O_{1i}-E_{1i})^2}{E_{1i}}+\frac{(O_{0i}-E_{0i})^2}{E_{0i}}\bigg)$}}
  \EndFunction
\end{algorithmic}
\end{algorithm}

In appendix \ref{sec:proof.HL.DP}, we prove the following theorem:
\newcommand{\thmHLDP}{Running Algorithm \ref{algo:hlstat} and disclosing any results of $U$-statistics it invokes is $\epsilon$-DP.}
\begin{theorem}
  \label{thm:HL.DP}
  \thmHLDP
\end{theorem}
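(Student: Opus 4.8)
The plan is to read Algorithm~\ref{algo:hlstat} as a sequential composition of two phases and to budget the per-phase loss with $\epsilon'=\epsilon/(L+9)$ so that the two contributions sum to exactly $\epsilon$. Phase one produces the group boundaries $t_1,\dots,t_{Q-1}$ (line~\ref{ln:HL.ecdf}); phase two produces the noisy expected and observed counts $E_{s,q},O_{s,q}$ (lines~\ref{ln:HL.expobs1}--\ref{ln:HL.expobs2}); and the returned statistic is a deterministic function of the latter, hence free by post-processing. I would show that phase one is $(L+1)\epsilon'$-DP and that, for every fixed boundary vector, phase two is $8\epsilon'$-DP, and then invoke sequential composition to obtain $(L+1)\epsilon'+8\epsilon'=(L+9)\epsilon'=\epsilon$.

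For phase one I would appeal directly to Theorem~\ref{thm:ecdf.dp}: the full curve $\ecdfdp{\HLModelFunc}$ can be published under $(L+1)\epsilon'$-DP by scaling the tree noise to $Lap(1/\epsilon')$. Each boundary $t_q=\ecdfdp{\HLModelFunc}^{-1}(q/Q)$ is obtained by the binary search of Algorithm~\ref{algo:invEcdf} run on the smoothed curve $\ecdfsm{\HLModelFunc}$, which is itself a deterministic post-processing of the published curve. Consequently all $Q-1$ boundaries together consume only the single budget $(L+1)\epsilon'$ and leak nothing beyond the disclosed ECDF.

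The crux is phase two, and the main obstacle is that the bins $[t_{q-1},t_q]$ are data dependent: a single changed instance perturbs the boundaries, which could in principle reshuffle the group membership of many other instances. I would neutralise this by exploiting the composition structure, conditioning on the boundary vector emitted by phase one and treating it as a fixed public constant when analysing phase two. With the boundaries frozen, the bins are fixed intervals, so passing to an adjacent dataset moves the single differing instance out of at most one bin and into at most one other, affecting at most two groups. Within each affected group only the four quantities $E_{0,q},E_{1,q},O_{0,q},O_{1,q}$ change, each by a contribution of at most one in the underlying count and hence each rendered $\epsilon'$-DP by its Laplace term; composing the at most $2\times 4=8$ affected releases gives $8\epsilon'$-DP uniformly over the frozen boundaries.

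Finally I would combine the phases: since phase one is $(L+1)\epsilon'$-DP and phase two is $8\epsilon'$-DP for every boundary vector it may receive, sequential composition shows that the joint release of all invoked $U$-statistics is $(L+9)\epsilon'$-DP, and substituting $\epsilon'=\epsilon/(L+9)$ yields the claimed $\epsilon$-DP. The one point demanding care beyond this accounting is to confirm that the four per-group releases use independent noise, so that they genuinely compose as four $\epsilon'$-mechanisms rather than leaking a noiseless expected-minus-observed residual; verifying this from the noise indices in lines~\ref{ln:HL.expobs1}--\ref{ln:HL.expobs2} is the step I would check explicitly.
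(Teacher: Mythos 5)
Your proposal is correct and follows essentially the same route as the paper's own proof: phase one is $(L+1)\epsilon'$-DP by Theorem~\ref{thm:ecdf.dp} with the inverse evaluations covered by post-processing, phase two is $8\epsilon'$-DP because only two groups and hence eight statistics change between adjacent datasets, and sequential composition gives $(L+9)\epsilon'=\epsilon$. Your explicit conditioning on the boundary vector, and your flag that the same noise index $\{(-q,s)\}$ appears in both lines~\ref{ln:HL.expobs1} and~\ref{ln:HL.expobs2} (so independence of the noise in $E_{s,i}$ and $O_{s,i}$ must be checked rather than assumed), are points the paper's proof passes over more quickly --- it simply asserts that independent Laplacian variables are used --- but the underlying argument is the same.
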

In particular, Algorithm \ref{algo:hlstat} may publish both the ECDF of predicted probabilities $\ecdfdp{\HLModelFunc}$ and the statistics $O_{s,i}$ and $E_{s,i}$, we only assume the aggregation primitives used for computing the $U$-statistics are secure.

  \section{Experiments}
  \label{sec:exp}

  \subsection{Setup}

  Our experiments aim at providing illustrations to our approach and at providing more insight in the practical behavior of our proposal.
Unless stated otherwise, our experiments average over 100 runs.
Experiments were performed on Intel Core i7-4600U CPUs at 2.10GHz with 16Gb of RAM.
 Code (using Python 3.9) to reproduce all experiments will be downloadable from the website of the authors. 
 For experiments involving ROC curves and the Hosmer-Lemeshow statistic, we first trained a logistic regression model using the Scikit-Learn package.  Our goal was not to obtain the most performant classifier, but to illustrate how our methods can assess properties of an arbitrary classifier.

  \subsection{Datasets}
  \label{sec:exp.dataset}
  \subsubsection{Synthetic data}

  We will use a dataset $X_{pois(\lambda)}$ constructed as follows.  Let $N=2^{15}$.  For all values $i\in[N]$, the number of instances in $X_{pois(\lambda)}$ such that $\ffeat(x_j)=i$ follows a Poisson distribution $Pois(\lambda)$.  At a high level this dataset has a steadily increasing ECDF, however locally there is quite some variance in how quickly it increases.

  \subsubsection{Real-world data}

We use the following datasets:
\begin{itemize}
\item Heart disease prediction : This data set aims to pinpoint the most relevant/risk factors of heart disease as well as predict the 10-year risk of coronary heart disease using information about the individuals (e.g. age, gender, smoking habits and blood pressure).
  (\url{https://www.kaggle.com/dileep070/heart-disease-prediction-using-logistic-regression})
\item Bank full : This data set is related to a marketing campaign of a Portuguese banking institution. It intends to know if the client would subscribe or not to the product (bank term deposit), using features about the clients (e.g. age, job, education level, marital status, owning a house, having loans).  (\url{https://www.kaggle.com/krantiswalke/bankfullcsv})
\item Diabetes data set: It studies the effects of diabetes to the readmission of patients to the hospital (\url{https://archive.ics.uci.edu/ml/datasets/Diabetes+130-US+hospitals+for+years+1999-2008\#})
\end{itemize}
Table \ref{tab:datasets} summarizes some relevant characteristics.

\begin{table}[t]
\caption{Dataset summary giving the number of instances, the fraction of positive instances and the number of features.}
\label{tab:datasets}
\vskip 0.15in
\begin{center}
\begin{small}
\begin{sc}
\begin{tabular}{lrrr}
  \hline
Data set & \#inst & pos.frac & \#feature \\\hline
Heart disease & 4328 & 0.152 & 15 \\
Bank-full & 45211 & 0.117 & 16 \\
Diabetes & 101766 & 0.460 & 49 \\
\end{tabular}
\end{sc}
\end{small}
\end{center}
\vskip -0.1in
\end{table}

\subsection{Smoothing}
\label{sec:exp.smooth}

To investigate the effect of smoothing on the error induced by the DP noise, we start from the $X_{Pois(\lambda)}$ dataset.  Figure \ref{fig:smooth.mseReduc.epsilon} shows for $p=1,2$ curves plotting as a function of $\epsilon$ the effect of smoothing $\ecdfdp{\ffeat}$ on the error made by the DP noise as \[
\frac{\left\|\ecdf{\ffeat}(X_{pois(\lambda)},\tau)-\ecdfsm{\ffeat}(X_{pois(\lambda)},\tau)\right\|_2^2}{\left\|\ecdf{\ffeat}(X_{pois(\lambda)},\tau)-\ecdfdp{\ffeat}(X_{pois(\lambda)},\tau)\right\|_2^2}
\]
One can see that, as expected for a $2$-norm evaluation, the $2$-norm smoothing outperforms the $1$-norm smoothing, especially for small $\epsilon$ values.  For roughly $\epsilon \ge 0.2$ its curve is below $1$, meaning $2$-norm smoothing in this range reduces the $2$-norm error induced by the DP noise.  The higher $\epsilon$ becomes, the less likely it is the small amount of DP noise will make the ECDF non-increasing, hence the effect of smoothing on the DP error goes to zero.  Figure \ref{fig:smooth.mseReduc.lambda} shows the same information for fixed $\epsilon$ and varying $\lambda$.  For large $\lambda$, the ECDF is strongly increasing and the relevance of smoothing is limited as adding DP noise rarely makes it non-decreasing. For moderate values of $\lambda$, smoothing improves the DP noise induced error.

\begin{figure}[ht]
\vskip 0.2in
\begin{center}
  \centerline{\includegraphics[width=\columnwidth]{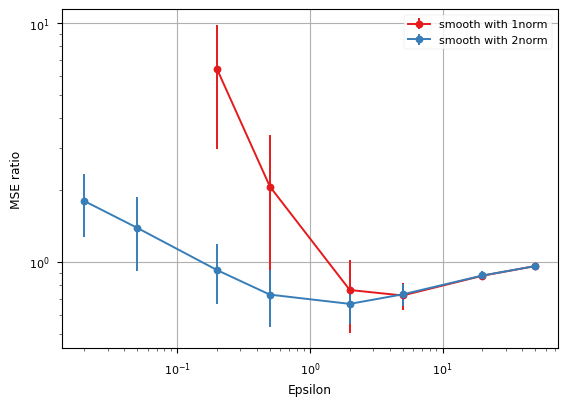}}
\caption{Effect of smoothing on DP error - fixed $\lambda=3$}
\label{fig:smooth.mseReduc.epsilon}
\end{center}
\vskip -0.2in
\end{figure}

\begin{figure}[ht]
\vskip 0.2in
\begin{center}
\centerline{\includegraphics[width=\columnwidth]{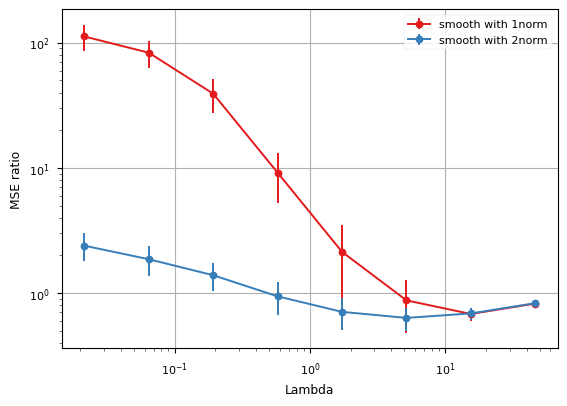}}
\caption{Effect of smoothing on DP error - fixed $\epsilon=1$}
\label{fig:smooth.mseReduc.lambda}
\end{center}
\vskip -0.2in
\end{figure}

\subsection{Evaluating an ECDF or its inverse}

Starting from the $X_{Pois(\lambda)}$ dataset, Figure \ref{fig:invEcdf} plots as a function of $\epsilon$ the mean square errors $(\ecdf{\ffeat}(x)-\ecdfdp{\ffeat}(x))^2$, $\ecdf{\ffeat}^{-1}(x)-\ecdfdp{\ffeat}^{-1}(x)$ and $\ecdf{\ffeat}^{-1}(x)-\ecdfsm{\ffeat}^{-1}(x)$ when evaluating on points $x$ uniformly distributed over the relevant domains.  The inverse ECDF evaluations are performed using Algorithm \ref{algo:invEcdf}.

\begin{figure}[ht]
\vskip 0.2in
\begin{center}
\centerline{\includegraphics[width=\columnwidth]{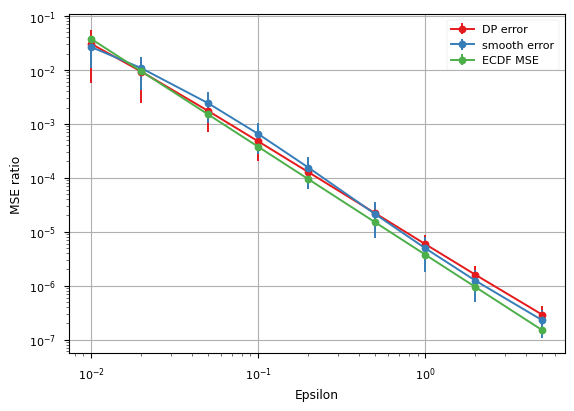}}
\caption{Inverse ECDF}
\label{fig:invEcdf}
\end{center}
\vskip -0.2in
\end{figure}

The fact that $\ecdfdp{\cdot}$ is not guaranteed to be non-decreasing and this could confuse the binary search algorithm doesn't seem to have a strong impact on the accuracy of the evaluation.  In fact, both the mean squared errors of $\ecdfdp{\ffeat}$ and the inverses of its smoothed and non-smoothed versions are of the same order of magnitude.

\subsection{ROC curve estimation}

Figure \ref{fig:roc.symmdiff} plots, as a function of $\epsilon$, the $1$-norm difference between the true ROC curve and the smoothed differentially private ROC curve for the Bank dataset.  This corresponds to the area of the symmetric difference of the area under the true ROC curve and the area under the differentially private ROC curve.  Even when the area under both curves would be the same, this value can be non-zero as the curves themselves differ.  As expected, error decreases with increasing $\epsilon$.  We see that the $1$-norm and $2$-norm smoothing perform about equally well.  As the unsmoothed differentially private ROC curves cross themselves, it is hard to compare their area with the area of the true ROC curve.

\begin{figure}[ht]
\vskip 0.2in
\begin{center}
\centerline{\includegraphics[width=\columnwidth]{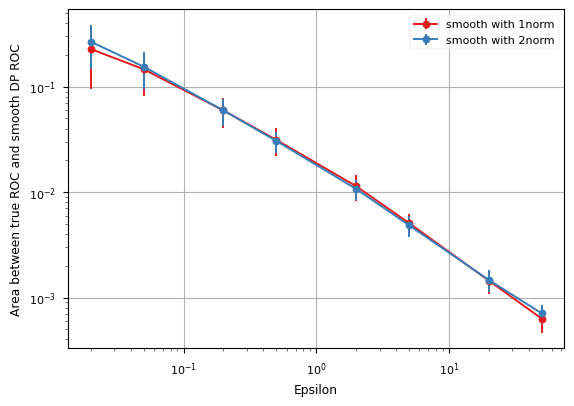}}
\caption{ROC curve estimation error}
\label{fig:roc.symmdiff}
\end{center}
\vskip -0.3in
\end{figure}

\subsection{Hosmer-Lemeshow}

As for ROC curves, to understand calibration one can both look at the complete picture of the predicted and observed counts of positive / negative instances in each of the $Q$ groups, and one can look at the HL-statistic which summarizes it as a $\chi^2$ statistic.  Here, we show results for the latter approach.

Figure \ref{fig:HL.bank} shows for a logistic regression model on the Bank dataset the mean square error of the Hosmer-Lemeshow test when computed privately for different values of $\epsilon$.  In appendix, Figure \ref{fig:HL.dia} provides the same information for the Diabetes dataset.  Especially for the Bank dataset we observe a quite large variance over the several runs.  This can be explained by the fact that the Bank dataset has less balanced classes (see Table \ref{tab:datasets}).  This causes both predicted and observed counts of positive examples in especially the lowest of the $Q=10$ groups to be rather small.  Even a small error in the small number $E_{1,1}$ (the predicted number of positives in the first group) appearing in the denominator of a term in the $H$ statistic (see Eq \ref{eq:HL.def}) may cause a large error in the final statistic.  We can conclude that especially if $\epsilon$ is small, if there is no need to know the individual statistics $E_{\cdot,\cdot}$ and $O_{\cdot,\cdot}$ and multi-party computation is available for other operations than $U$-statistics, a direct approach may be preferable where one first computes securely the correct statistic and then adds noise at the end proportional to the sensitivity of (only) the HL statistic.

\begin{figure}[ht]
\vskip 0.2in
\begin{center}
\centerline{\includegraphics[width=\columnwidth]{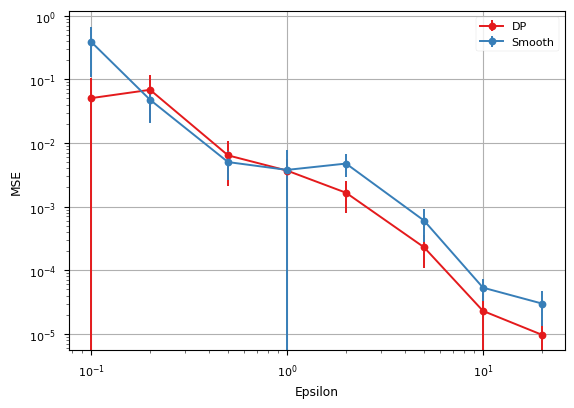}}
\caption{Hosmer-Lemeshow statistic relative MSE for a logistic regression model on the Bank dataset}
\label{fig:HL.bank}
\end{center}
\vskip -0.3in
\end{figure}

\subsection{Runtime}

The cost of a distributed algorithm is often dominated by its communication cost.  These costs have been analyzed in Section \ref{sec:algo}.  Here, we complement this analysis with an experiment on the most expensive local computation: the smoothing of the private ECDF.  For solving the optimization problem in Eq \ref{eq:smooth.opt}, we use the \textsc{cvxopt} package, in particular a linear program solver for the $1$-norm smoothing and a quadratic program solver for the $2$-norm smoothing.  Figure \ref{fig:runtime.smooth} shows the runtime as a function of $N$, which is a good problem size parameter as the number of variables in the optimization problem grows as $O(N\log(N))$.  One can observe that the quadratic program is solved more quickly.

\begin{figure}[ht]
\vskip 0.2in
\begin{center}
\centerline{\includegraphics[width=\columnwidth]{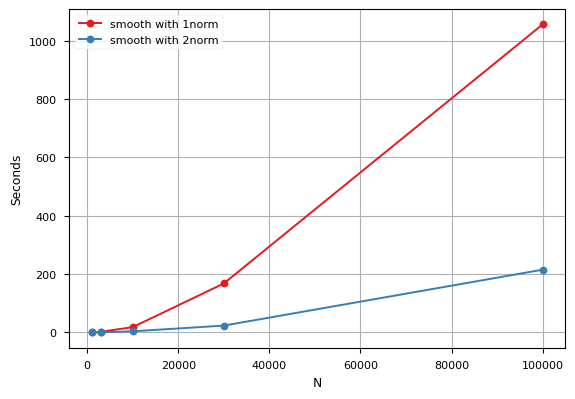}}
\caption{Runtime of solving Eq \ref{eq:smooth.opt}}
\label{fig:runtime.smooth}
\end{center}
\vskip -0.3in
\end{figure}

\section{Discussion}
\label{sec:concl}

In this paper we studied differentially private empirical cumulative distribution functions.  We proved privacy guarantees and proposed algorithms to securely compute such private ECDFs.  We elaborated in more depth two applications of ECDFs: ROC curves and the Hosmer-Lemeshow statistic.  Our experimental results suggest the approach can convey the full information of an ECDF at a reasonable precision and privacy level.

Cumulative distribution functions are important in various other areas of machine learning and statistics.  One application we didn't elaborate in-depth concerns histograms of (discretized) continuous variables, e.g., a histogram of the yearly income of a set of persons grouped in bins of $\$5000$.  A common strategy \cite{DBLP:journals/corr/abs-2103-16787} is to add DP noise to the count in each bin independently.  An alternative strategy would be to consider the cumulative distribution, which can be made private by noise only logarithmic in the number of bins.  The interpretation then is that noise can not only consist of a change in the count in a bin, but also in a shift from one bin to an adjacent one.

There are several potentially interesting lines of future work.  Among others it would be interesting to elaborate more applications of ECDF, to develop more efficient algorithms to securely compute private ECDF and get a better understanding of the various statistical processes affecting the error DP noise induces.

\newpage
\appendix

\bibliography{biblio}
\bibliographystyle{plain}

\section{Proofs}

\subsection{Proof of Theorem \ref{thm:ecdf.dp}}
\label{sec:proof.thm.ecdf.dp}

Before proving Theorem \ref{thm:ecdf.dp}, we first introduce some additional definitions and lemmas.
For $L\in\mathbb{N}\setminus\{0\}$, let
\[Z_{L,d} = \left(\indicSet{[d+(j-1)2^l+1,d+j2^l]}\right)_{(j,l)\in \rvidxL{L}}\]
be a vector of functions indexed by $\rvidxL{L}$ where $\indicSet{X}:\mathbb{N}\to\{0,1\}$ is a function with $\forall x\in X: \indicSet{X}(x)=1$ and $\forall x\in\mathbb{N}\setminus X : \indicSet{X}(x)=0$.  Note that $[d,d-1]=\emptyset$ and hence $\indicSet{[d,d-1]}=0$.

\begin{lemma}
  \label{lm:need.halfL.noiseterms}
  Let $L\in \mathbb{N}\setminus\{0\}$, $d\in\mathbb{N}$ and $b-d\in\left[2^L\right]$, then there exist vectors $\chi^s\in\{-1,0,1\}^{\rvidxL{L}}$, $s\in\{+,-\}$, with the number of non-zero elements $\|\chi^s\|_0$ at most $\lceil (L+1)/2\rceil$ such that $\indicSet{[d+1,b]} = \chi^-.Z_{L,d}$ and  $\indicSet{[b,d+2^L]} = \chi^+.Z_{L,d}$.
\end{lemma}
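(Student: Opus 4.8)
The plan is to prove a single statement by strong induction on $L$ that bounds \emph{simultaneously} the cost of representing a prefix $\indicSet{[d+1,b]}$ (giving $\chi^-$) and a suffix $\indicSet{[b,d+2^L]}$ (giving $\chi^+$) as a signed combination of the dyadic block indicators collected in $Z_{L,d}$, with the induction decreasing $L$ by \emph{two} at each step. The guiding intuition is that allowing coefficients in $\{-1,0,1\}$ amounts to writing $b-d$ in signed binary (non-adjacent form): one can reach $b$ either by adding a dyadic block that undershoots it or by subtracting a block that overshoots it, and this lets us avoid two consecutive nonzero ``digits,'' so that with $L+1$ available levels at most $\lceil (L+1)/2\rceil$ nonzero terms are needed. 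I would phrase the induction hypothesis as a function $f(L)=\lceil(L+1)/2\rceil$: for every $d$, every prefix $\indicSet{[d+1,d+k]}$ with $k\in[0,2^L]$ and every suffix $\indicSet{[d+k,d+2^L]}$ with $k\in[1,2^L+1]$ (the empty interval allowed so the recursion closes cleanly) is a $\{-1,0,1\}$-combination of the level-$\le L$ blocks of the length-$2^L$ tree rooted at $d$ using at most $f(L)$ nonzero coefficients.

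For the inductive step ($L\ge 2$) I would split $[d+1,d+2^L]$ into the four quarters $Q_1,\dots,Q_4$ of length $2^{L-2}$ and locate $b$. A preliminary observation makes the recursion sound: each quarter is itself a faithful length-$2^{L-2}$ dyadic subtree, since a level-$l$ block ($l\le L-2$) inside a quarter is aligned on a multiple of $2^l$ in the big tree and hence is one of the functions in $Z_{L,d}$. Then I treat the position of $b$ case by case for the prefix. If $b\in Q_1$ the prefix lies entirely in $Q_1$, and the hypothesis gives $f(L-2)$ terms with no overhead. If $b\in Q_2$ or $b\in Q_3$, I write the prefix as one block spanning the full quarters left of $b$'s quarter (a single level-$(L-2)$ or level-$(L-1)$ block) \emph{plus} a prefix of the quarter containing $b$, costing $1+f(L-2)$. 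If $b\in Q_4$, I instead take the full level-$L$ block and \emph{subtract} the suffix of $Q_4$ lying to the right of $b$, again costing $1+f(L-2)$. Since $1+f(L-2)=1+\lceil(L-1)/2\rceil=\lceil(L+1)/2\rceil=f(L)$, the budget is met exactly in every case. The suffix bound for $\chi^+$ then follows by the reflection $x\mapsto d+2^L+1-x$, which is a symmetry of the dyadic partition (it maps the level-$l$ block at position $j$ to the level-$l$ block at position $2^{L-l}-j+1$) and interchanges prefixes with suffixes, so the prefix argument applies verbatim. The base cases $L\in\{0,1\}$ are immediate: a prefix or suffix of a tree of length $1$ or $2$ is either empty or a single block, matching $f(0)=f(1)=1$.

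The step I expect to require the most care is this case analysis on the location of $b$. For each quarter one must verify that the spanning block is a genuine member of $Z_{L,d}$ (correct level, position index within the range $j\in[\lceil 2^{L-l}\rceil]$) and, crucially, that the residual prefix or suffix handed back to the hypothesis sits inside a \emph{single} quarter rather than straddling two, since only then does the recursion drop two full levels and preserve the bound. The negative coefficients are essential precisely in the $b\in Q_4$ case (and symmetrically for suffixes): without the ability to subtract an overshoot, a $b$ near the right end would force an extra block at every second level and yield only the weaker bound $L+1$ coming from \cite{dwork2010differential,DBLP:journals/corr/abs-2103-16787}. I would close by checking the telescoping identities $\indicSet{[d+1,d+P]}=\indicSet{[d+1,d+P']}\pm\indicSet{\mathrm{block}}$ that certify the claimed signed sums genuinely equal the target indicators.
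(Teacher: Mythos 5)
Your proposal is correct and follows essentially the same route as the paper's proof: an induction that drops $L$ by two at each step, a four-quarter split locating $b$, a residual prefix handled recursively in $b$'s quarter plus one spanning block, and a subtraction of an overshooting block when $b$ lies in the far quarter, with the arithmetic $1+\lceil(L-1)/2\rceil=\lceil(L+1)/2\rceil$ closing the bound. The only cosmetic difference is that you derive the suffix case from the prefix case via an explicit reflection $x\mapsto d+2^L+1-x$, whereas the paper treats $\chi^+$ and $\chi^-$ in parallel and invokes the same symmetry implicitly by declaring the cases $b'=2,3$ analogous to $b'=1,0$.
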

\begin{proof}
  We need to prove that we can write the functions $\indicSet{[d+1,b]}$ and $\indicSet{[b,d+2^L]}$ as weighted sums (with coefficients $-1$ or $+1$) of elements of $Z_{L,d}$.  We proceed by induction.

  \textit{Base cases.} Consider first $L=0$ and $L=1$.
In both cases it is easy to verify that $\indicSet{[d+1,b]}$ and $\indicSet{[b,d+2^L]}$ are both elements of $Z_{2,d}$ and hence $\chi^+$ and $\chi^-$ can be set to suitable base vectors, i.e., $\|\chi^+\|_0=\|\chi^-\|_0=1$.

  \textit{Induction step.}
  Suppose that $L\ge 2$ and assume that the lemma has been proven for $L'\le L-2$.  There are four cases, depending on $b'=\lceil (b-d)/2^{L-2}\rceil\in[4]$.
  \vspace{-1mm}
  \begin{itemize}
    \item
      Consider first $b'=0$, i.e., $b-d \le 2^{L-2}$:
      \begin{itemize}
        \item[($\chi^-$)]
          Applying the lemma for $L'=L-2$ we can write $\indicSet{[d+1,b]}$ as a weighted sum of $\lceil (L'+1)/2\rceil < \lceil (L+1)/2\rceil$ elements of $Z_{L',d}\subseteq Z_{L,d}$.
        \item[($\chi^+$)] As $\indicSet{[b,d+2^L]}=\indicSet{[d+1,d+2^L]}-\indicSet{[d+1,b-1]}$, we can write $\indicSet{[b,d+2^L]}$ as a weighted sum of at most $\lceil (L+1)/2\rceil$ elements of $Z_{L,d}$: $\indicSet{[d+1,d+2^L]} \in Z_{L,d}$ and either $\indicSet{[d+1,b-1]} = 0$ or $b-1\in[2^{L-2}]$ implying that $\indicSet{[d+1,b-1]}$ as in case ($\chi^-$) above.
          \end{itemize}
        \item Consider next $b'=1$, i.e., $2^{L-2}+1\le b-d\le 2^{L-1}$.
          \begin{itemize}
          \item[$\chi^-$] As  $\indicSet{[d+1,b]}=\indicSet{[d+1,d+2^{L-2}]}+\indicSet{[(d+2^{L-2})+1,b]}$,  $\indicSet{[b,d+2^L]}\in Z_{L,d}$ it suffices to apply the induction hypothesis and note that we can write $\indicSet{[(d+2^{L-2})+1,b]}$ as a sum of $\lceil (L+1)/2\rceil-1$ elements of $Z_{L-2,d+2^{L-2}}\subseteq Z_{L,d}$.
            \item[$\chi^+$] $\indicSet{[b,d+2^L]} = \indicSet{[b,d+2^{L-1}]} + \indicSet{[d+2^{L-1}+1,d+2^L]}$ and $\indicSet{[b,d+2^{L-1}]}\in Z_{L,d}$ hence it suffices to apply the induction hypothesis to $\indicSet{[d+2^{L-1}+1,d+2^L]}$.
          \end{itemize}
\item
  Cases $b'=2$ and $b'=3$ are analoguous to cases $b'=1$ and
   $b'=0$ respectively.
\end{itemize}
This completes the proof.
\end{proof}

\textbf{Theorem \ref{thm:ecdf.dp}} \thmEcdfDp

\begin{proof}
  Consider two adjacent datasets $X^{(1)}$ and $X^{(2)}$.
  As the datasets are adjacent, they differ in only one instance,
  i.e., there exists a dataset $X'$ and instances $x_\Delta^{(1)}$ and $x_\Delta^{(2)}$ such that $X^{(s)}=X'\cup\{x_\Delta^{(s)}\}$ for $s\in \{1,2\}$.

  The inner product of $\rvidxL{L}$-indexed vectors $\eta Z_{L,0}$ is a function mapping any $i\in[2^L]$ to
\[ \sum_{(j,l)} \{\eta_{j,l}\mid (j-1)2^l+1 \le i \le j2^l\}
  = \sum_{l=0}^L \eta_{\lceil i/2^l\rceil,l}.
\]
  Then, defining $\ecdf{\ffeat}(X,\tau) = \left(\ecdf{\ffeat}(X,\tau_i)\right)_{i\in[2^L]}$ and  $\ecdfdp{\ffeat}(X,\tau) = \left(\ecdfdp{\ffeat}(X,\tau_i)\right)_{i\in[2^L]}$, where we set $\forall i\in [N+1, 2^L]:\tau_i=\phimax$, we can rewrite Eq \eqref{eq:def.ecdfdp.pt} as
\begin{equation}
  \label{eq:def.ecdfdp.func}
  \ecdfdp{\ffeat}(X,\tau) = \ecdf{\ffeat}(X,\tau) + \frac{\eta Z_{L,0}}{n}
\end{equation}
For $s\in{1,2}$, there holds
\begin{equation}
  \label{eq:diff.ecdf.XpXs}
n\ecdf{\ffeat}(X^{(s)},\tau) - (n-1)\ecdf{\ffeat}(X^{(s)},\tau) = \indicSet{[t_s+1,2^L]}
\end{equation}
where $t_s=\max\left\{i \in [N]\mid x_\Delta^{(s)} > \tau_i\right\}$.
Without loss of generality we assume that $X_\Delta^{(2)} < x_\Delta^{(1)}$.
Combining Eq \eqref{eq:def.ecdfdp.func} and twice Eq \eqref{eq:diff.ecdf.XpXs} we get
\begin{equation}
  \label{eq:ecdfdp.diffX12}
  n\ecdfdp{\ffeat}(X^{(2)},\tau)
  = n\ecdf{\ffeat}(X^{(1)},\tau)
   +\indicSet{[t_1+1,t_2]}+\eta Z_{L,0} 
\end{equation}
\newcommand{\lmid}{{l_{\hbox{{\tiny{m}}}}}}
\newcommand{\jmid}{{j_{\hbox{{\tiny{m}}}}}}
Consider the largest $\lmid\in [0,L]$ for which there exists a $\jmid\in\mathbb{N}$ such that $t_1\le \jmid 2^{\lmid} < t_2$.  Notice that $\jmid$ is odd, as else we would have $(\jmid/2)2^{\lmid+1}=\jmid 2^{\lmid}$ and $\lmid$ would not be maximal.  Also, there holds $(\jmid-1)2^{\lmid} < t_1$ as else we would have $t_1 \le (\jmid-1)2^{\lmid}$ with $\jmid-1$ even and again $\lmid$ would not be maximal.  Similarly we can infer $t_2\le (\jmid+1)2^{\lmid}$. As $t_2\le 2^L$ there follows $\lmid\le L-1$.
Let $d_1=(\jmid-1)2^{\lmid}$ and $d_2 = \jmid 2^{\lmid}$.
Applying twice Lemma \ref{lm:need.halfL.noiseterms} we can conclude there exist vectors $\chi_1^+,\chi_2^- \in\{-1,0,+1\}^{\rvidxL{\lmid}}$ with $\|\chi_1^+\|_0\le \lceil L/2\rceil$ and $\|\chi_2^-\|_0\le \lceil L/2\rceil$ such that
$\chi_1^+.Z_{\lmid,d_1 }= \indicSet{[t_1+1,d_2]}$ and $\chi_2^-.Z_{\lmid,d_2 } = \indicSet{[d_2 +1,t_2]}$.  As the elements of the vectors $Z_{\lmid,d_1 }$ and $Z_{\lmid,d_2}$ also occur in $Z_{L,0}$, we can conclude that there exists a vector $\chi\in\{-1,0,+1\}^{\rvidxL{L}}$ with $\|\chi\|_0 \le L+1$ such that
\begin{equation}
  \label{eq:chi.t1t2}
  \chi.Z_{L,0} = \indicSet{\left[t_1+1,t_2\right]}.
  \end{equation}

We now express the probability of observing $\ecdfdp{\ffeat}(X^{(2)})$ given $X^{(2)}$ and compare it the probability of making the same observation given $X^{(1)}$.  Using Eqs \eqref{eq:ecdfdp.diffX12} and \eqref{eq:chi.t1t2} and setting $\Gamma(y,X^{(1)}) = y-n.\ecdfdp{\ffeat}(X^{(1)},\tau)$,
\begin{eqnarray*}
  \lefteqn{P\left(n.\ecdfdp{\ffeat}(X^{(2)},\tau)=y\right)}
  &&\\
  &=& P\left(n\ecdf{\ffeat}(X^{(1)},\tau)+\indicSet{[t_1+1,t_2]}+\eta Z_{L,0} = y\right) \\
&=& P\left(\chi Z_{L,0}+\eta Z_{L,0}=\Gamma(y,X')\right)\\
  &=& \int_u P(\eta = u) \, \indicVal{\chi Z_{L,0}+\eta Z_{L,0}=\Gamma(y,X')}\\
  &=& \int_u P(\eta = u+\chi) \, \indicVal{\eta Z_{L,0}=\Gamma(y,X')}
\end{eqnarray*}
Also,
\begin{eqnarray*}
  \lefteqn{P\left(n.\ecdfdp{\ffeat}(X^{(2)},\tau)=y\right)} &&\\
&=& \int_u P(\eta = u) \, \indicVal{\eta Z_{L,0}=\Gamma(y,X')}
\end{eqnarray*}
The probability ratio for a given $\eta$ is
  \begin{eqnarray*}
    \lefteqn{\left|\log\left(\frac{P(\eta = u+\chi)}{P(\eta = u)}\right)\right|}
    &&\\
    &=& \left|\sum_{(j,l)}  \log\left(\frac{P(\eta_{j,l} = u_{j,l}+\chi_{j,l})}{P(\eta_{j,l} = u_{j,l})}\right) \right|   \\
    &\le & \sum_{(j,l) : \chi_{j,l}\neq 0} \left|\log\left(\frac{P(\eta_{j,l} = u_{j,l}+\chi_{j,l})}{P(\eta_{j,l} = u_{j,l})}\right)\right|    \\
    &\le & (L+1) \frac{\epsilon}{L+1} \\
    &=& \epsilon
  \end{eqnarray*}
  We can conclude
  \[
\left|\log\left(\frac{P\left(n.\ecdfdp{\ffeat}(X^{(2)},\tau)=y\right)}{P\left(n.\ecdfdp{\ffeat}(X^{(2)},\tau)=y\right)}\right)\right| \le \epsilon
\]
as both probabilities are integrals over functions who only differ by a factor $e^\epsilon$.  This proves the privacy guarantee.

The expected squared error made by adding the noise is
\begin{eqnarray*}
  \lefteqn{\mathbb{E}\left[\left(\ecdfdp{\ffeat}(x)-\ecdf{\ffeat}(x)\right)^2\right]}
  && \\
  &=& \mathbb{E}\left[\left( \sum_{l=0}^L \eta_{\lceil i/2^l\rceil,l} \right)^2\right] \\
  &=& \sum_{l=0}^L \hbox{var}\left(\eta_{\lceil i/2^l\rceil,l}\right) \\
  &=& (L+1).2\left(\frac{L+1}{\epsilon}\right)^2 \\
  &=& 2(L+1)^3/\epsilon^2
  \end{eqnarray*}

\end{proof}

The above result has some implications for the more commonly studied problem of releasing statistics under continual observation \cite{DBLP:journals/corr/abs-2103-16787}.  In this problem, one considers data streams $(x_i)_{i=1}^N$ and wants to report at any time step $t$ the partial sum $s_t=\sum_{i=1}^t x_i$.  Two data streams are considered adjacent if they differ at most in one time step.  A change at a time step $t^*$ changes all partial sums between $t^*$ and $N$, i.e., the sums in a half-open interval.  In our setting, we also considered datasets adjacent if an instance changes its value, which means the partial sums change between its old value and its new value, i.e., the sums in a closed interval.   Applying our technique above to this problem, we get the following results:

\begin{theorem}
  \label{thm:improve.data.stream.dp}
  Let datasets $X^{(1)},X^{(2)}\in\mathcal{X}^N$ be adjacent if there exists at most one $i$ such that $X^{(1)}_i \neq X^{(2)}_i$.  Let $L=\lceil\log_2(N)\rceil$. Then, publishing ${\hat{s}}_t = \sum_{i=1}^t x_i + \sum_{l=0}^L \eta_{\lceil i/2^l\rceil,l}$ where $\eta_{j,l}\sim Lap(\lceil(L+1)/2\rceil/\epsilon$ is $\epsilon$-differentially private.    Similarly, with $\eta_{j,i}\sim \mathcal{N}(0,\lceil(L+1)/2\rceil z^2)$ the publishing is $1/2z^2$-zCDP (as defined in \cite{DBLP:journals/corr/abs-2103-16787}).
\end{theorem}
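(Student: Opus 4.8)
The plan is to recycle the machinery from the proof of Theorem \ref{thm:ecdf.dp}, exploiting the fact (noted in the paragraph preceding the statement) that in the continual-observation setting an adjacent stream perturbs the partial sums over a \emph{half-open} interval rather than a closed one, so that a \emph{single} application of Lemma \ref{lm:need.halfL.noiseterms} suffices instead of the two applications used in Theorem \ref{thm:ecdf.dp}.

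First I would fix two adjacent streams $X^{(1)},X^{(2)}$ differing only at a time step $t^*$, with $|X^{(1)}_{t^*}-X^{(2)}_{t^*}|\le 1$. The partial sums then satisfy $s^{(1)}_t-s^{(2)}_t=(X^{(1)}_{t^*}-X^{(2)}_{t^*})\,\indicVal{t\ge t^*}$, so after padding $N$ up to $2^L$ the difference of the two partial-sum vectors is a scalar of magnitude at most $1$ times $\indicSet{[t^*,2^L]}$. The key observation is that this is exactly the second indicator $\indicSet{[b,d+2^L]}$ produced by Lemma \ref{lm:need.halfL.noiseterms} with $d=0$ and $b=t^*$ (and $b-d=t^*\in[2^L]$ holds since $t^*\in[N]$ and $N\le 2^L$). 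Hence that lemma yields, in one shot, a vector $\chi\in\{-1,0,1\}^{\rvidxL{L}}$ with $\|\chi\|_0\le\lceil (L+1)/2\rceil$ such that $\chi . Z_{L,0}=\indicSet{[t^*,2^L]}$.

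Next I would run the same change-of-variables computation as in Theorem \ref{thm:ecdf.dp}: expressing the density of the published vector under $X^{(2)}$ as an integral of $P(\eta=u+\chi)$ against the event $\{\eta Z_{L,0}=\Gamma\}$ and under $X^{(1)}$ as the same integral of $P(\eta=u)$. For the Laplace noise with scale $\lceil(L+1)/2\rceil/\epsilon$, the pointwise log-density ratio is bounded by $\sum_{(j,l):\chi_{j,l}\neq 0}|\chi_{j,l}|\,\epsilon/\lceil(L+1)/2\rceil$; since at most $\lceil(L+1)/2\rceil$ coordinates are nonzero and each has magnitude $1$, this is at most $\epsilon$, giving $\epsilon$-DP.

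For the zCDP claim I would reuse the very same $\chi$ as the mean shift the Gaussian noise must absorb: its $L_2$-sensitivity is $\|\chi\|_2=\sqrt{\|\chi\|_0}\le\sqrt{\lceil(L+1)/2\rceil}$. With isotropic noise of per-coordinate variance $\sigma^2=\lceil(L+1)/2\rceil z^2$, the standard Gaussian-mechanism bound gives $\rho$-zCDP with $\rho=\|\chi\|_2^2/(2\sigma^2)\le \lceil(L+1)/2\rceil/(2\lceil(L+1)/2\rceil z^2)=1/(2z^2)$, as claimed. The main obstacle is really the one conceptual point driving the whole improvement: recognizing that a half-open interval matches one of the two forms Lemma \ref{lm:need.halfL.noiseterms} outputs directly, so it costs only $\lceil(L+1)/2\rceil$ nonzero coordinates, whereas a closed interval must be split at a midpoint and handled by two invocations (costing $L+1$ coordinates). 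Both the $\epsilon$-DP and the zCDP statements then follow from this single combinatorial fact fed into the two respective mechanism analyses.
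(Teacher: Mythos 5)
Your proposal is correct and follows exactly the route the paper intends: the paper's own proof is a one-sentence remark ("a direct application of Lemma \ref{lm:need.halfL.noiseterms} using similar ideas as in the proof of Theorem \ref{thm:ecdf.dp}"), and your write-up supplies precisely those details — a single invocation of the lemma on the half-open interval $[t^*,2^L]$ yielding $\|\chi\|_0\le\lceil(L+1)/2\rceil$, followed by the same change-of-variables argument for the Laplace case and the standard Gaussian-mechanism sensitivity bound for the zCDP case. The only implicit assumption you (reasonably) add, matching the continual-observation literature, is that a single stream element changes by at most $1$.
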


\begin{proof}
  The proof is a direct application of Lemma \ref{lm:need.halfL.noiseterms} using similar ideas as in the proof of Theorem \ref{thm:ecdf.dp}.
\end{proof}

The proof of Theorem 1 in \cite{DBLP:journals/corr/abs-2103-16787} concludes that at most $L$ terms in the partial sums they disclose will change between adjacent datasets, and hence need to compose $L$ differential private mechanisms.  Even though they focus on Renyi differential privacy rather than $\epsilon$-differential privacy, our idea allows in their setting too to reduce the number of changed terms with about a factor $2$ (when using base $2$) and hence to improve the privacy guarantee.

\subsection{Proof of Theorem \ref{thm:HL.DP}}
\label{sec:proof.HL.DP}

\textbf{Theorem \ref{thm:HL.DP}.} \thmHLDP
\begin{proof}
  The algorithm queries data at lines \ref{ln:HL.ecdf} and at lines  \ref{ln:HL.expobs1}--\ref{ln:HL.expobs2}.

  First, from Theorem \ref{thm:ecdf.dp} we know that by using $Lap(1/\epsilon')$ noise variables for evaluating $\ecdfdp{\HLModelFunc}^{-1}$, the resulting $\ecdfdp{\HLModelFunc}(\cdot)$ is $(L+1)\epsilon'$-DP independently of the number of needed evaluations of it during calls to Algorithm \ref{algo:invEcdf}.

  Next, for the evaluation of the statistics in lines \ref{ln:HL.expobs1}--\ref{ln:HL.expobs2} independent Laplacian random variables are used.  However, when we compare two adjacent datasets where only one instance differs, only $2$ of the $Q$ groups and the corresponding $8$ statistics are affected.  Hence, if all $4Q$ statistics in lines \ref{ln:HL.expobs1}--\ref{ln:HL.expobs2} are $\epsilon'$-DP, together they are $8\epsilon'$-DP.

  In summary, applying the classic composition rule for differential privacy we get that the algorithm is $(L+1)\epsilon' + 8\epsilon' = \epsilon$ -differentially private.
\end{proof}

\section{Additional experimental results}

Figures \ref{fig:ex.roc.bank.0.2} and \ref{fig:ex.roc.diabetic.0.05} show examples of ROC curves on the Bank and Diabetic datasets.

\begin{figure}[ht]
\vskip 0.2in
\begin{center}
\centerline{\includegraphics[width=\columnwidth]{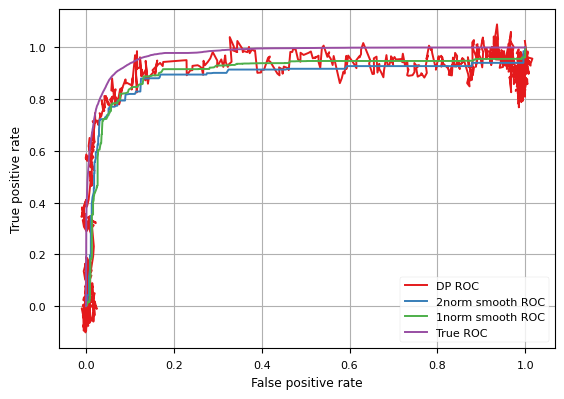}}
\caption{ROC curve for logistic regression on the Bank dataset, and $\epsilon$-DP curves with $\epsilon=0.2$.}
\label{fig:ex.roc.bank.0.2}
\end{center}
\vskip -0.2in
\end{figure}

\begin{figure}[ht]
\vskip 0.2in
\begin{center}
\centerline{\includegraphics[width=\columnwidth]{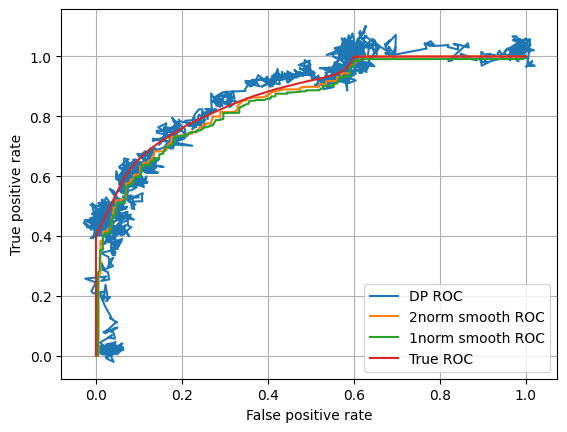}}
\caption{ROC curve for logistic regression on the Diabetic dataset, and $\epsilon$-DP curves with $\epsilon=0.05$.}
\label{fig:ex.roc.diabetic.0.05}
\end{center}
\vskip -0.2in
\end{figure}

Figure \ref{fig:HL.dia} show the relative MSE of the Hosmer-Lemeshow statistic on the Diabetes dataset as a function of $\epsilon$.

\begin{figure}[ht]
\vskip 0.2in
\begin{center}
\caption{Hosmer-Lemeshow statistic MSE for a logistic regression model on the Diabetes dataset}
\label{fig:HL.dia}
\end{center}
\vskip -0.2in
\end{figure}

\newpage
\appendix

\end{document}